\pgfplotsset{compat=1.11}
\newenvironment{myquote}[1]%
  {\list{}{\leftmargin=#1\rightmargin=#1}\item[]}%
  {\endlist}
\def\BState{\State\hskip-\ALG@thistlm}
\let\oldReturn\Return
\renewcommand{\Return}{\State\oldReturn}
\algnewcommand{\Initialize}[1]{%
  \State \textbf{initialize:} \hspace*{0.2em}\parbox[t]{.4\linewidth}{\raggedright #1}
}
\renewcommand{\(}{\left(}
\renewcommand{\)}{\right)}
\newcommand{\vt}[1]{\boldsymbol{\mathbf{#1}}}
\newcommand{\qt}{\enquote}
\newtheorem{theorem}{Theorem}
\newtheorem{lemma}[theorem]{Lemma}
\theoremstyle{definition} 
\newtheorem{example}[theorem]{Example}
\newtheorem{remark}[theorem]{Remark}
\begin{document}

\title{Weighted Voting on the Blockchain: Improving Consensus in Proof of Stake Protocols\thanks{This work was supported in part by the National Research Foundation (NRF), Prime Minister's Office, Singapore, under its National Cybersecurity R\&D Programme (Award No. NRF2016NCR-NCR002-028) and administered by the National Cybersecurity R\&D Directorate. Georgios Piliouras acknowledges SUTD grant SRG ESD 2015 097, MOE AcRF Tier 2 Grant 2016-T2-1-170 and NRF 2018 Fellowship NRF-NRFF2018-07.}
}

\author{
\IEEEauthorblockN{Stefanos Leonardos\IEEEauthorrefmark{1}, Dani\"el Reijsbergen\IEEEauthorrefmark{1}, Georgios Piliouras\IEEEauthorrefmark{1}}
\IEEEauthorblockA{\IEEEauthorrefmark{1}Singapore University of Technology and Design}
}

\maketitle
\begin{abstract}
Proof of Stake (PoS) protocols rely on voting mechanisms to reach consensus on the current state. If an enhanced majority of \emph{staking nodes}, also called validators, agree on a proposed block, then this block is appended to the blockchain. Yet, these protocols remain vulnerable to faults caused by validators who abstain either accidentally or maliciously. \par To protect against such faults while retaining the PoS selection and reward allocation schemes, we study \emph{weighted voting} in validator committees. We for\-malize the block creation process and introduce validators' \emph{voting} \emph{profiles} which we update by a \emph{multiplicative weights} algorithm relative to validators' voting behavior and aggregate blockchain rewards. Using this framework, we leverage \emph{weighted majority voting rules} that optimize collective decision making to show, both numerically and analytically, that the consensus mechanism is more robust if validators' votes are appropriately scaled. We raise potential issues and limitations of weighted voting in trustless, decentralized networks and relate our results to the design of current PoS protocols.
\end{abstract}

\begin{IEEEkeywords}
Proof of Stake, Consensus, Weig\-hted Voting, Multiplicative Weights Update
\end{IEEEkeywords}

\section{Introduction}\label{sec:introduction}

In the Nakamoto consensus protocol \cite{Na08} that underpins the popular Bitcoin cryptocurrency, a single miner claims the right to append the next block to the blockchain by a demonstrating so-called Proof of Work (PoW), i.e., the solution to a moderately hard cryptographic puzzle. The strengths and vulnerabilities of this mechanism are well understood, see \cite{Ga15,Leo16,Ga17} and \cite{Ey14,Ki16,Sa17}. Two fundamental properties satisfied by PoW selection are the following: (i) Miners holding $p\%$ of computational power will create on average $p\%$ of blocks that will be part of the blockchain assuming that all miners follow the protocol. This property relies on the randomness of the cryptographic puzzle and ensures \emph{fairness} in the allocation of mining rewards\cite{Pa17a,Pa17b}. (ii) Miners can be selected as the next block creators only when they actually create and submit the block to the network. This property asserts that the random selection of the next block creator and the creation of the next block occur simultaneously. \par
Proof of Stake (PoS) or virtual mining protocols \cite{Bon15} constitute alternative selection mechanisms that aim to retain PoW's benefits while improving on its weaknesses \cite{Be16,Br18}. While different PoS protocols propose different schemes\cite{ouro17,Gi17, Be16b,Pa18, Bu1810,Kw14}, in general, the block proposal mechanism is the following: blocks are created by staking nodes, also called \emph{validators}\cite{Kw18,Da18}, who are granted the right to participate in the block creation process by locking capital in protocol currency, called \emph{stake}. Subsequently, a pseudo-random mechanism selects nodes proportionally to their stake to form committees that will decide on the validity of a block proposed by a selected node, called \emph{block proposer} or \emph{leader}. At the core of the consensus mechanism, the selected validators cast approval or disapproval votes on the proposed block. If an enhanced majority of approval votes, usually $2/3$ of the committee size \cite{La82}, is reached, then the proposed block is accepted and appended to the blockchain. \par
However, maliciously or accidentally abstaining validators can cause consensus failures and stall the blockchain\footnote{Malicious absention refers to non-voting or incorrect voting to attack the protocol, whereas accidental abstention refers to a variety of reasons, such as dropping offline, experiencing network latency, bugs in client updates or being a victim of a censoring/eclipse attack.}. 
The reason is, that unlike Nakamoto consensus \cite{We18}, PoS protocols decouple the selection of block creator(s) from the actual block creation and hence do not satisfy property (ii) above. Since consensus on the \qt{valid} state of the blockchain relies on the voting behavior of all participating validators, this also implies that the actual rate of blocks that a validator gets to create may differ from the times that they get selected in a committee. Hence, while the PoS selection mechanism aims to ensure fair allocation of \qt{mining} rewards in line with PoW's property (i) above, in practice, it may fail to do so. These observations motivate the following question:
\begin{myquote}{0.15in}
\emph{How can we improve the efficiency and robustness of the consensus mechanism, i.e., how can we use information on validators' past voting patterns to enforce that with overwhelming probability the selected validator committees will reach consensus on the \qt{valid} state of the blockchain?} 
\end{myquote}
The problem of improving consensus has been treated in the context of fixed-size committee voting by a rich stream of literature\cite{Ni82,Ni84,Sh84,Yo95,Be97}. The derived solutions depend on quantifying voters' abilities to make correct decisions and apply \emph{weighted voting rules} in which votes are weighted according to voters' profiles. Our goal is to apply these results in the setting of PoS protocols. The link is immediate, since, as remarked in \cite{Be97}, the assumptions of their original model imply both decentralized information processing and limited communication. The additional challenges that have to be treated in the blockchain context, concern the updating of these profiles and protection against their manipulation by adversaries. \par
To address this problem, we formulate the proper mathematical framework and develop a model to quantify validators' \emph{voting profiles}. The proposed scheme is applied \emph{once} voting committees have formed and does not modify the underlying PoS selection and reward mechanisms. Each staking node (validator) is assigned a score based on their so-far contribution to protocol execution. When selected for a committee, their vote is weighted relative to their profile and consensus is decided according to a \emph{weighted majority rule} that maximizes the expected collective rewards \cite{Sh84,Be97}. Finally, based on their vote and the overall consensus outcome, their voting profile is revised according to a fully parametrizable \emph{multiplicative weights update algorithm} \cite{Ar12,Ba18}. Supported by numerical examples and simulations, our findings demonstrate that weighted voting renders the consensus mechanism more efficient, even if more than $1/3$ of nodes are not properly voting. In this way, the proposed scheme restores fairness without compromising other PoS features. Additionally, since it does not modify the underlying PoS mechanism, it can be tested, implemented and reverted with minimal cost to existing protocol users. \par
We raise issues that pertain to weighted voting such as loss of anonymity \& centralization and discuss their relevance to protocol design and implementation. 
\subsection{Related Work}\label{sub:related}
Although weighted voting in distributed systems is known to increase efficiency, incentive compatibility \cite{Az07,Az14}, and network reliability \cite{Ba08b}, it also raises additional risks \cite{Ba08a,Az09,Zu12}. Similar to \emph{proof of reputation} systems \cite{Yu18}, weighted voting deviates from the principle \emph{one node -- one vote} and hence, is vulnerable to manipulation by adversaries. Weighted voting becomes particularly relevant in less anonymous \cite{Ko14}, private or permissioned blockchains \cite{An18,St18,Vi18}, in \emph{delegated} PoS protocols \cite{Gr17,Li18} and in PoS protocols with low targeted number of \emph{staking pools}\cite{Bru18}. However, under rather general assumptions it can also be relevant for permissionless Proof of Stake or hybrid Proof of Work and Proof of Stake systems \cite{Bu1810,ouro17}. We provide such a use case in \Cref{sub:use1}.\par
A profiling system reduces or eliminates the anonymity of staking nodes which is at odds with the design philosophy of permissionless blockchains \cite{Pre20,Oce20}. However, with blockchain governance yet to be determined, introducing less anonymity may be a desired feature. Recent results support relatively low desired numbers of staking nodes \cite{Bu1811} or stake pools \cite{Bru18}. In such schemes, reputation will implicitly or explicitly influence protocol execution. Moreover, stake pools can retain anonymity at the user level, i.e., while the pool itself becomes identifiable by its voting profile, the users remain anonymous. In any case, the introduction of voting profiles and weighted voting seems particularly relevant to permissioned blockchains or delegated PoS mechanisms.\par
More interesting is the implementation of the weighted voting algorithm in conjunction with state of the art consensus schemes such as the delegated Proof of Stake of EOS.IO \cite{Gr17}. Instead of substituting the underlying consensus mechanism, the weighted voting scheme enhances its functionality and operability by providing an additional layer of defense against accidental or adversarial system failures. In fact, in consensus systems that use a limited number of delegates, weighted voting can prove beneficial in accelerating consensus and enhancing their scalability, liveness and safety. This rests on the fact that assigning and updating profiles to the delegates is much easier than in general permissionless blockchains. However, even in the latter case, the expected latency (and overhead) that will be introduced to the system by the profiling system does not exceed nor significantly increases the computational complexity of updating the validators' stakes.\par
Finally, claiming that \emph{all} PoS protocols fit under the stylized model -- see \Cref{sec:model} -- that we use to design the proposed weighted voting scheme would be an oversimplification and wrong. Yet, most PoS protocols that we are aware of involve voting mechanisms and hence, may benefit -- to a larger or lesser extent -- from the present proposal. An incomplete list includes \cite{ouro17,Be16b,Gi17,Pa18,Kw14}. For more extensive details of PoS protocols, we refer to \cite{Br18, Ba17,Di18,Ca17}. Further comparisons to existing proposals and implementation details are given in \Cref{sec:implementation}

\subsection{Outline}
\label{sub:outline}
In \Cref{sec:model}, we abstract the PoS consensus mechanism as a voting game. In \Cref{sec:results}, we construct the improved voting scheme and illustrate it with examples. \Cref{sec:mwua} is devoted to the design of the updating algorithm and the derivation of bounds on the validators' faulty behavior that can be tolerated with significantly affecting their profiles. The resulting scheme -- weighted voting rule and multiplicative weight updating algorithm -- are tested and related in potential use cases in \Cref{sub:numerical}. We conclude the paper with a discussion about limitations and implementation issues in \Cref{sec:implementation} and a summary of our findings in \Cref{sec:conclusions}.\par
As an extension of a conference paper\cite{Le19}, the present article contributes both technical and application specific materials that make the paper self-contained and enhance the intuition behind the derived results. In addition to a detailed comparison with related work, cf. \Cref{sub:related}, technical extensions comprise \Cref{lem:work} and its proof, the proof of \Cref{thm:main}, \Cref{fig:work} and \Cref{rem:work}. From an application oriented perspective, the present paper elaborates on the design and optimal parametrization of such voting schemes in practice. This is done in \Cref{sec:faulty}, \Cref{sub:use1} and \Cref{sub:use2} in the context of the state of the art PoS proposals and implementations, such as Ethereum with Casper FFG and EOS.IO. The main focus of the analysis is in the mechanics of committee formation and the risk mitigation that is achieved via the proposed scheme for potential investors. In particular, the new content addresses incentives and adversarial behavior in potential use cases and moves forward to explain the steps for the adoption of the proposed scheme in practice. 

\section{The Model}\label{sec:model}

Our terminology is based on the Ethereum 2.0 PoS protocol specification\cite{Bu1810}. Yet, our model remains as general as possible in an effort to capture similar voting mechanisms implemented by related PoS platforms.
\begin{description}[leftmargin=0cm]
\item[Time:] Time is divided into \emph{time slots} $t\in\{0,1,2,\dots\}$ of fixed duration $d$. Each time slot is dedicated to the proposal and creation of a new \emph{block} $B_t$. Time slot $t=0$ is the time of creation of the \emph{genesis block} $B_0$. 
\item[Validators:] The main actors in the block proposal and creation mechanism are the \emph{staking nodes}, also called \emph{validators}, denoted by $i\in I$, where $I\subseteq \mathbb N$ is the \emph{set} of all validators. $\mathcal B_{i,t}$ will denote the set of blocks for which validator $i$ is aware of at time $t\ge0$. 
\item[Stake:] The deposit or \emph{stake} is the amount of the underlying cryptocurrency that a potential validator locks as \emph{Proof of Stake} (PoS) to participate in the block creation process. Such deposits may change over time. Accordingly, let $v_{i,t}\ge0$ denote the stake of validator $i\in I$ and $v_t:=\sum_{i\in I}v_{i,t}$ the total stake at time slot $t\in\mathbb N$. If $v_{i,t}>0$, then validator $i$ is called \emph{active} at time slot $t$. Validators who have withdrawn from $I$ or who have not entered it yet, can be thought of as validators with stake $v_{i,t}=0$. Thus, although the set of validators is dynamic, we may write $I$ instead of $I_t$ to denote the set of validators at any time $t\ge0$.
\item[Block proposer \& committees:] To create blocks and extend the blockchain, active validators are selected \emph{proportionally to their stake} by a pseudo-random mechanism which assigns to each time slot $t$ a \emph{leader} or \emph{block proposer} and a fixed-sized \emph{committee} $N_t=\{1,2,\dots,n\}$ of validators\footnote{The mechanics of the pseudo-random mechanism vary between different protocols. Here, we are not interested in risks associated with manipulating this mechanism and focus on the mechanics of the voting process \emph{once} a random committee has been formed.}. The block proposer is assigned the task to propose a block $B_t$ to the committee. In turn, the committee votes on whether the proposed block should be appended to the blockchain or not. This process constitutes the core of the consensus mechanism and will be the focus of the present paper. 
\item[Validators' strategies:] The set of strategies of a validator who has been selected in a committee will be denoted by $S=\{-1,1\}$, where $-1$ stands for rejecting and $1$ for approving the proposed block. In particular, $-1$ also corresponds to \emph{not} casting a vote, either deliberately or accidentally. Accordingly, let $X_{i,t}$ denote the indicator random variable

\[X_{i,t}=\begin{cases}\phantom{-}1, &\text{if validator $i$ voted on the approval of the proposed block $B_t$ in time slot $t$}\\-1, & \text{otherwise}\end{cases}\]

We will write $\mathbf X_t:=\(X_{1,t},X_{2,t},\dots,X_{n,t}\)$ to denote the random vector of all indicator random variables \emph{prior to} the validators' voting decisions and $\mathbf x_t:=\(x_{1,t},x_{2,t},\dots,x_{n,t}\) \in \{-1,1\}^n$ to denote the realized \emph{decision} or \emph{action} profile of the committee at time $t\ge0$. Accordingly, we will write $X:=\{-1,1\}^n$ to denote the set of all possible decision profiles.
\item[Decision rule:] A \emph{decision rule} $f:X\to \{-1,1\}$, also called \emph{social choice function} or \emph{aggregation of preferences rule}, is a function that receives as input the action profile $\mathbf x_t$ and outputs a decision in $\{-1,1\}$, where $-1$ and $1$ stand for disapproval and approval of the proposed block, respectively. We will focus on \emph{(simple or enhanced) majority rules} $f_q, q\in[0.5,1]$ defined by 

\begin{equation}\label{eq:decision}
f_q\(\mathbf x_t\):=\begin{cases}\phantom{-}1, &\text{if }\sum_{i=1}^nx_{i,t}\ge \(2q-1\)n,\\ -1, & \text{otherwise}\end{cases}
\end{equation}

If at least a fraction $q\in [0.5,1]$ of the selected validators approve the proposed block, then this block is appended to the blockchain. Otherwise, the time slot remains empty and the mechanism progresses to the next time slot.
\end{description}

If block proposers follow the protocol and do not behave maliciously, then all proposed blocks are valid. Moreover, if the network is not partitioned and network latency is insignificant (lower than the time slot duration during which votes are expected to appear), then there is no controversy about which blocks are valid, since all validators view essentially the same blockchain (state), i.e., $\mathcal B_{i,t}:=\mathcal B$ for all $i\in I, t\ge0$. Under these conditions, the required majority should be reached and valid blocks should be regularly approved and appended \cite{Gi17,Pa18}. However, in practice, two main reasons may lead to failures on the consensus mechanism 
\begin{itemize}[leftmargin=*]
\item adversarial behavior: a malicious node abstains from voting or censors other validators' votes to block the required majority and stall the block creation process. 
\item accidental behavior: validators drop offline accidentally or due to negligence, they experience bugs on client updates or bad network connectivity, their votes do not propagate through the network in the expected time slot or they are victims themselves of a censoring/eclipse attack.
\end{itemize}
Our goal is to study how existing results on optimizing aggregation of preferences in committees, in particular the results of \cite{Ni82,Sh84,Be97}, can be applied to the PoS blockchain setting and improve the underlying consensus mechanism. The application of these results will be immediate once we have defined the proper framework. 

\section{An Improved Voting Rule}\label{sec:results}

To optimize the consensus process from an aggregative perspective, we quantify the collective benefits and losses (payoffs) from correct and wrong decisions respectively. This is done in \Cref{tab:welfare}. 
\begin{table}[!htb]
\centering
\begin{tabular}{@{}lrrc@{}}
& & \multicolumn{2}{r}{\textbf{Proposed Block} $B_t$ }\\[0.2cm]
&& Valid (1) & Invalid (-1)\\
\cmidrule{2-4}
\multirow{2}{*}{\textbf{Committee\hspace{4pt}}}& Approve & 1 & $-\ell_a$\\
\cmidrule{2-4}
& Reject & $-\ell_r$ & $\hspace{1em}1$\\
\cmidrule{2-4}\\
\end{tabular}
\caption{Collective welfare from consensus outcome.}
\label{tab:welfare}\vspace*{-0.3cm}
\end{table}
The benefit from making a correct decision, i.e., approving a valid block or rejecting an invalid block, is scaled to $1$. Here, $-1$ denotes an invalid and $1$ a valid block $B_t$, cf. \Cref{sub:valid}. If a valid block is rejected, then a loss of $\ell_r>0$ is incurred which corresponds to the waste of computational resources and the failure of the system to process pending transactions. On the other hand, if validators vote for an invalid block, then $\ell_a>0$ represents the losses from validating a conflicting history\footnote{This may include approval votes for a block that a malicious node is trying to create in order to double-spend or perform some other kind of attack. It may also involve votes that get wasted on blocks that will be subsequently reverted or abandonded.}. Determining the exact values of $\ell_r$ and $\ell_a$ is a matter of protocol parametrization. In the present treatment, and without compromising the generality of the results, we will assume that $\ell_r\ll \ell_a$, i.e., that accepting invalid blocks has greater repercussions for the system, since such blocks typically stem from malicious -- and not accidental -- behavior. Finally, let $\alpha\in\(0,1\)$ denote the prior probability that a proposed block is \emph{invalid}, e.g., blocks that an adversarial is trying to create.\par
Given the above, we seek to maximize the \emph{expected collective welfare} $E_t$ at time slot $t>0$. $E_t$ depends on the probabilities of accepting a valid block and of rejecting an invalid block under the decision rule $f_q$, 

\begin{equation*}\pi_{1}\(f_q\):=P\(f_q\(\mathbf X_t\)=1\mid B_t=1\)\end{equation*}
and $\pi_{-1}\(f_q\):=P\(f_q\(\mathbf X_t\)=-1\mid B_t=-1\)$, respectively. Using this notation, 

\begin{equation}\label{eq:welfare}E_t\(f_q\)=\(1-\alpha\)\(1+\ell_r\)\pi_{1}\(f_q\)+\alpha\(1+\ell_a\)\pi_{-1}\(f_q\)\end{equation}
where we have omitted constant terms. To estimate $\pi_{1}\(f_q\)$ and $\pi_{-1}\(f_q\)$, and hence, to maximize $E_t$, we need to reason about the decision rule $f_q$ and particularly, about the distribution of the decision variables $X_{i,t}$. Fortunately, this can be done by retrieving existing information about validators' past votes that have been stored as messages on the blockchain. This is captured by the notion of validators' \emph{voting profiles} that we introduce next.

\begin{description}[leftmargin=0cm]
\item[Voting profiles:] Each validator $i\in I$ is assigned a \emph{score} $p_{i,t}\in[0,1]$ that corresponds to their \emph{voting profile} at the start of time slot $t$. The value $p_{i,t}$ can be thought of as validator $i$'s \emph{decision ability} or \emph{probability} that $i$ will vote correctly, i.e.,  

\begin{equation}\label{eq:correct}p_{i,t}:=P\(X_{i,t}=1\mid B_t=1\)=P\(X_{i,t}=-1\mid B_t=-1\)\end{equation}

for $i\in I, t\ge 0$. For instance, in its simplest form, $p_{i,t}$ can be thought of as the fraction of validators $i$'s correct votes to the number of slots in $\{0,1,\dots,t-1\}$ that $i$ was selected in a committee. In what follows, we will examine a more elaborate scheme to define the profiles $p_{i,t}$ that depends on the collective welfare of the consensus outcome, cf. \Cref{tab:welfare}.
\item[Initializing \& suspending profiles:] We will set a newly entering validator $i$'s voting profile at $p_{i,0}:=0.5$ and will require that $p_{i,t}\in [0.5,1)$, for any $t\ge g$, where $g\ge1$ denotes an initial grace period. If $p_{i,t}<0.5$, for some $t\ge g$, then validator $i$ will be suspended from $I$. The reasoning behind these choices is detailed in \Cref{sub:initiate}. 
\item[Updating scheme:] In general, an \emph{updating scheme} is given by a function $h: [0,1] \times \{-1,1\}\to [0,1]$ which revises validator $i$'s voting profile after time slot $t$ based on $i$'s prior voting profile $p_{i,t}$, the correctness of their voting decision $x_{i,t}$ and the current state $\mathcal B_t$, i.e.,
\[p_{i,t+1}=h\(p_{i,t},x_{i,t}, \mathcal B_t\)\]
for $t>0$. The current state $\mathcal B_t$ may include all relevant information, such as the collective welfare parameters, cf. \Cref{tab:welfare}, the validity of the proposed block and the consensus outcome. If a validator $i\in I$ has not be selected for slot $t$, then simply $p_{i,t+1}=p_{i,t}$. A concrete updating scheme that fits in this description is developed in \Cref{sec:results}.
\end{description}

Given the introduced validators' voting profiles, we now return to the problem of maximizing the collective welfare $E_t$.

\begin{lemma}\label{lem:work}
For a selected validator committee $N=\{1,2,\dots,n\}$ at time slot $t$, we may condition on the vector of validators' voting profiles $\mathbf p_t=\(p_{1,t},p_{2,t},\dots,p_{n,t}\)$ and write the probability $\pi_{1}\(f_q\mid \vt p_t\)$ of approving a correct block with the decision rule $f_q$ as
\begin{equation}\label{eq:probability}\pi_{1}\(f_q\mid \vt p_t\)=\sum_{\vt x_t:f_q\(\vt x_t\)=1}\(\prod_{i:\\ x_{i,t}=1}p_{i,t}\prod_{j: x_{j,t}=-1}\(1-p_{j,t}\)\)\end{equation}
and similarly for $\pi_{-1}\(f_q\mid \vt p_t\)$. 
\end{lemma}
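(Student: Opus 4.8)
The plan is to prove this by a direct application of the law of total probability (marginalizing over the possible action profiles) combined with the conditional independence of the validators' votes given the validity of the block. First I would fix the selected committee $N=\{1,2,\dots,n\}$ and condition throughout on the event $\{B_t=1\}$ together with the vector of voting profiles $\vt p_t$. Since the decision rule $f_q$ is a deterministic function of the realized action profile $\vt x_t$, the event $\{f_q(\vt X_t)=1\}$ is precisely the disjoint union over all profiles $\vt x_t\in X$ with $f_q(\vt x_t)=1$ of the elementary events $\{\vt X_t=\vt x_t\}$. Hence
\[
\pi_1(f_q\mid\vt p_t)=P\bigl(f_q(\vt X_t)=1\mid B_t=1,\vt p_t\bigr)=\sum_{\vt x_t:\,f_q(\vt x_t)=1}P\bigl(\vt X_t=\vt x_t\mid B_t=1,\vt p_t\bigr).
\]

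Next I would compute each term $P(\vt X_t=\vt x_t\mid B_t=1,\vt p_t)$. The key modeling assumption here is that, conditional on the validity of the proposed block, the validators vote independently of one another, so the joint probability factorizes into the product $\prod_{i=1}^n P(X_{i,t}=x_{i,t}\mid B_t=1,\vt p_t)$. For each individual factor I would invoke the defining property of the voting profile in \eqref{eq:correct}: if $x_{i,t}=1$ then $P(X_{i,t}=1\mid B_t=1)=p_{i,t}$, and if $x_{i,t}=-1$ then $P(X_{i,t}=-1\mid B_t=1)=1-P(X_{i,t}=1\mid B_t=1)=1-p_{i,t}$. Substituting these into the factorized product and splitting the index set according to whether $x_{i,t}=1$ or $x_{i,t}=-1$ yields $\prod_{i:\,x_{i,t}=1}p_{i,t}\prod_{j:\,x_{j,t}=-1}(1-p_{j,t})$, and summing over all approving profiles $\vt x_t$ gives \eqref{eq:probability}. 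The argument for $\pi_{-1}(f_q\mid\vt p_t)$ is symmetric, using instead the event $\{B_t=-1\}$, the set of profiles with $f_q(\vt x_t)=-1$, and the second equality in \eqref{eq:correct}, namely $P(X_{i,t}=-1\mid B_t=-1)=p_{i,t}$.

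The main obstacle is not computational but a matter of making the probabilistic model explicit: the formula relies essentially on the conditional independence of the votes $\{X_{i,t}\}_{i\in N}$ given $B_t$ and given $\vt p_t$, which should be stated as a standing assumption (it is the natural analogue of the classical Condorcet-jury independence hypothesis underlying the weighted-voting literature cited in the paper, e.g.\ \cite{Ni82,Sh84,Be97}). I would also need to note that $\vt p_t$ is measurable with respect to the committee selection and past history, so conditioning on it is well-defined and does not interfere with the conditioning on $B_t$; once independence is granted, the remainder is the routine total-probability-plus-factorization bookkeeping sketched above.
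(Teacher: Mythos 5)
Your proposal is correct and follows essentially the same route as the paper: decompose the event $\{f_q(\vt X_t)=1\}$ over the approving action profiles and factorize each profile's probability using the conditional independence of the votes together with \eqref{eq:correct}. The only difference is presentational — you spell out the total-probability step and flag the independence hypothesis as a standing assumption, which the paper invokes more informally in its one-paragraph justification.
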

\begin{proof}
This expression for $\pi_{1}\(f_q\mid \vt p_t\)$ is derived as follows: The summation ranges over all action profiles $\vt x_t$ for which the decision rule $f_q$ approves the proposed block, i.e., $f_q\(\vt x_t\)=1$. The double product inside the parenthesis is precisely the likelihood of each of these profiles given that validators' decisions are independent. Specifically, the first product ranges over all validators $i\in I$ who vote correctly, i.e., $i:x_{i,t}=1$, and multiplies each one's probability of a correct vote, i.e., $p_{i,t}$, and the second ranges over the remaining validators $j\in I$ who vote incorrectly, i.e., $j:x_{j,t}=-1$, and multiplies each one's probability of voting incorrectly, i.e., $(1-p_{j,t})$. 
\end{proof}

Given these expressions, the problem of maximizing $E_t\(f_q\)$ can now be studied as an instant of the \emph{committee-voting} models in \cite{Be97} and \cite{Ni82,Sh84}. 

\begin{example}[\hspace{1sp}Adjusted from \cite{Sh84}]\label{ex:shapley}
Consider a committee of $5$ validators with voting profiles, i.e., empirical probabilities of voting correctly, $\vt p=\(0.9,0.9,0.6,0.6,0.6\)$, as in \cite{Sh84}. In the unweighted case, or equivalently in the case in which all votes are weighted equally, and under the $2/3$-majority decision rule $f_{2/3}$ that is commonly used in consensus protocols \cite{Pe80,La82}, the probability of reaching consensus on the correct block is equal to the probability that at least $4$ out of the $5$ validators vote correctly. This is 

\begin{align*}0.9^20.6^3+2\cdot0.6^3\(0.9\)\(0.1\)+3\cdot0.9^20.6^2\(0.4\)\approx 0.56
\end{align*}
which is lower even than the lowest voting profile of $0.6$. A naive improvement would be to only consider the vote of the validator with the highest voting profile, i.e., of either the first or the second validator. This would increase the probability of correct voting to $0.9$, however at a toll on decentralization. 
\end{example}

In the naive improvement of the previous example, the vote of the best validator received a \emph{weight} of $1$ and the vote of all others a \emph{weight} of $0$. This raises the question of whether we can assign non-trivial \emph{weights} (scale factors) to all validators and still improve the probability of a correct decision. The answer is affirmative and hinges on the notions of \emph{weighted voting} and \emph{weighted majority rules}.

\begin{description}[leftmargin=0cm]
\item[Weighted majority rule:] For a set of $n$ validators, let $\vt w_t:=\(w_{1,t},w_{2,t},\dots,w_{n,t}\)$ denote a vector of non-negative \emph{weights} or \emph{scaling factors}, with $w_t:=\sum_{i=1}^nw_{i,t}$. The \emph{weighted majority rule}, $f_{q}\(w\)$, or simply $f_q$, is defined as 

\begin{equation}\label{eq:wmr}f_q\(\mathbf x_t\):=\begin{cases}\phantom{-}1, &\text{if }\sum_{i=1}^nw_{i,t}x_{i,t}\ge \(2q-1\)w_t,\\ -1, & \text{otherwise}\end{cases}\end{equation}

for $q\in [0.5,1]$. If all votes are equally weighted, i.e., if $w_{i,t}=1$ for all $i=1,\dots,n, t>0$, then \eqref{eq:wmr} reduces to \eqref{eq:decision}.
\end{description}
Using this notation, our goal is to determine the weights $\vt w_t$ and the weighted majority rule $f_{\overline q_t}$ -- or equivalently the quota $\overline q_t$ --  that optimize the collective welfare $E_t$ given the selected committee $N_t$ of validators at time slot $t$, i.e., 

\begin{equation}\label{eq:lp}\max_{q, \vt w_t} {\left\{E_t\(f_q,w_t\)\right\}}
\end{equation}
This is the statement of the following Theorem which is due to \cite{Be97} and in special instances due to \cite{Ni82, Sh84}. 
\begin{theorem}[Optimal Weighted Voting Scheme,\cite{Be97}]\label{thm:main}
Consider a committee $N_t=\{1,2,\dots,n\}$ of validators with voting pro\-files $\vt p_t=\(p_{i,t},p_{2,t},\dots,p_{n,t}\)$ that have been selected to vote on the proposed block in time slot $t>0$. Then, given $\alpha$ and the collective welfare parameters $\ell_a,\ell_r$ in \Cref{tab:welfare}, the decision rule that maximizes the collective welfare, cf. \eqref{eq:lp}, is given by the weighted majority rule $f_{\overline q_t}$, with quota 
\begin{align}\label{eq:optimal}\overline q_{t}&:= \frac12\left[1-\(\ln{\(\frac{1-\alpha}{\alpha}\)}+\ln{\(\frac{1+\ell_r}{1+\ell_a}\)}\)w_t^{-1}\right]
\intertext{and individual weights}
\label{eq:weights} w_{i,t}&:=\ln{\(\frac{p_{i,t}}{1-p_{i,t}}\)}, \qquad \text{for }i=1,2,\dots,n.
\end{align}  
\end{theorem}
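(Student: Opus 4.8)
The plan is to recognize the optimization in~\eqref{eq:lp} as a Bayes decision problem and to solve it by a \emph{pointwise} (profile-by-profile) argument, in the spirit of the Neyman--Pearson lemma and the Bayes-optimal classifier. The key observation is that a weighted majority rule $f_q(w)$ enters $E_t$ in~\eqref{eq:welfare} only through the acceptance region $A:=\{\vt x_t\in X:\sum_{i=1}^n w_{i,t}x_{i,t}\ge(2q-1)w_t\}$ it induces, and that as $(q,\vt w_t)$ vary over their admissible range $A$ ranges over the half-space cuts of the hypercube $X=\{-1,1\}^n$. I would therefore first maximize $E_t$ over \emph{all} decision rules $f:X\to\{-1,1\}$ (equivalently, over all subsets $A\subseteq X$) and then verify that the maximizer is itself a half-space cut of the form~\eqref{eq:wmr}, so that it in fact also solves the more restrictive problem~\eqref{eq:lp}.

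For the main step I would proceed as follows. Using conditional independence of the votes together with~\eqref{eq:correct}, the likelihood of a profile $\vt x_t$ under each hypothesis is $P(\vt x_t\mid B_t=1)=\prod_{i:x_{i,t}=1}p_{i,t}\prod_{j:x_{j,t}=-1}(1-p_{j,t})$ and $P(\vt x_t\mid B_t=-1)=\prod_{i:x_{i,t}=1}(1-p_{i,t})\prod_{j:x_{j,t}=-1}p_{j,t}$, exactly as in the proof of \Cref{lem:work}. Substituting these into~\eqref{eq:welfare} rewrites $E_t(f)$ as a sum over profiles in which each $\vt x_t$ contributes $(1-\alpha)(1+\ell_r)P(\vt x_t\mid B_t=1)$ when $f(\vt x_t)=1$ and $\alpha(1+\ell_a)P(\vt x_t\mid B_t=-1)$ when $f(\vt x_t)=-1$, and these choices are independent across profiles. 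Hence the welfare-maximizing rule is
\begin{equation*}
f^\star(\vt x_t)=1 \iff (1-\alpha)(1+\ell_r)\,P(\vt x_t\mid B_t=1)\ge \alpha(1+\ell_a)\,P(\vt x_t\mid B_t=-1).
\end{equation*}
Taking logarithms, the log-likelihood ratio telescopes into $\ln\frac{P(\vt x_t\mid B_t=1)}{P(\vt x_t\mid B_t=-1)}=\sum_{i:x_{i,t}=1}\ln\frac{p_{i,t}}{1-p_{i,t}}-\sum_{j:x_{j,t}=-1}\ln\frac{p_{j,t}}{1-p_{j,t}}=\sum_{i=1}^n w_{i,t}x_{i,t}$ with $w_{i,t}=\ln\frac{p_{i,t}}{1-p_{i,t}}$ as in~\eqref{eq:weights} (using $x_{i,t}\in\{-1,1\}$), while the right-hand side becomes the constant $-\big(\ln\frac{1-\alpha}{\alpha}+\ln\frac{1+\ell_r}{1+\ell_a}\big)$. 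Writing this constant as $(2\overline q_t-1)w_t$ and solving for $\overline q_t$ yields precisely~\eqref{eq:optimal}; comparing with~\eqref{eq:wmr} identifies $f^\star$ with $f_{\overline q_t}(w)$, which completes the argument.

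I do not expect a genuine obstacle here --- the result is essentially the Bayes-optimal likelihood-ratio test specialized to independent binary signals --- so the real work lies in handling the boundary cases cleanly. Equality in the decision inequality can be resolved either way without changing $E_t$, and I would break ties towards acceptance to match the ``$\ge$'' convention in~\eqref{eq:wmr}. The weights $w_{i,t}$ and the quota $\overline q_t$ are well defined because the model forces $p_{i,t}\in[0.5,1)$ after the grace period, so each $w_{i,t}\in[0,\infty)$ (in particular non-negative, as~\eqref{eq:wmr} requires) and $w_t>0$ whenever some profile strictly exceeds $0.5$; the degenerate case in which all profiles equal $0.5$ should be noted separately. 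Finally, $\overline q_t$ lies in the nominal range $[0.5,1]$ exactly when $\ln\frac{1-\alpha}{\alpha}+\ln\frac{1+\ell_r}{1+\ell_a}\le 0$, which under the standing assumption $\ell_r\ll\ell_a$ fails only for sufficiently small $\alpha$; outside this range the same formula still defines the welfare-optimal linear rule via~\eqref{eq:wmr}, a point worth remarking on in the text.
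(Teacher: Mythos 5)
Your proposal is correct and follows essentially the same route as the paper's proof: both relax \eqref{eq:lp} to a pointwise (profile-by-profile) comparison of the two welfare contributions, take logarithms of the resulting likelihood-ratio condition to obtain the linear rule $\sum_i w_{i,t}x_{i,t}$ with $w_{i,t}=\ln\big(p_{i,t}/(1-p_{i,t})\big)$, and then read off $\overline q_t$ from $(2\overline q_t-1)w_t=-\ln\frac{1-\alpha}{\alpha}-\ln\frac{1+\ell_r}{1+\ell_a}$; the only differences are cosmetic (your explicit Bayes/likelihood-ratio framing and direct telescoping identity in place of the paper's $y_i,z_i$ substitution, plus your more careful handling of ties and boundary cases, which the paper glosses over).
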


\begin{proof} To solve \eqref{eq:lp}, we will control the decision rule $f_q$ in \eqref{eq:welfare}. Then, the optimal quota and individual weights in \eqref{eq:optimal} and \eqref{eq:weights} will follow naturally. To see this, we expand \eqref{eq:welfare} using \eqref{eq:probability} which gives 

\begin{align*}
E_t\(f_q\)&=\(1-\alpha\)\(1+\ell_r\)\pi_{1}\(f_q\)+\alpha\(1+\ell_a\)\pi_{-1}\(f_q\)\\
&=\(1-\alpha\)\(1+\ell_r\)\sum_{\vt x_t:f_q\(\vt x_t\)=1}\(\prod_{i: x_{i,t}=1}p_{i,t}\prod_{j: x_{j,t}=-1}\(1-p_{j,t}\)\)+\alpha\(1+\ell_a\)\sum_{\vt x_t:f_q\(\vt x_t\)=-1}\(\prod_{i: x_{i,t}=-1}p_{i,t}\prod_{j: x_{j,t}=1}\(1-p_{j,t}\)\)\\
&=\sum_{\vt x_t:f_q\(\vt x_t\)=1}\(\(1-\alpha\)\(1+\ell_r\)\prod_{i: x_{i,t}=1}p_{i,t}\prod_{j: x_{j,t}=-1}\(1-p_{j,t}\)\)+\sum_{\vt x_t:f_q\(\vt x_t\)=-1}\(\alpha\(1+\ell_a\)\prod_{i: x_{i,t}=-1}p_{i,t}\prod_{j: x_{j,t}=1}\(1-p_{j,t}\)\)
\end{align*}
where we used that $p_{i,t}$ denotes the probability of a correct decision, cf. \eqref{eq:correct}. Our control variable to optimize the expected collective welfare in the previous equation is the decision rule $f_q$. It is immediate, that a general decision rule $f$ which maximizes $E_t$ is given by the following condition: for each decision profile $\vt x_t\in X=\{-1,1\}^n$ append every block for which 

\begin{equation}\label{eq:comparison}\(1-\alpha\)\(1+\ell_r\)\prod_{i: x_{i,t}=1}p_{i,t}\prod_{j: x_{j,t}=-1}\(1-p_{j,t}\)>\alpha\(1+\ell_a\)\prod_{i: x_{i,t}=-1}p_{i,t}\prod_{j: x_{j,t}=1}\(1-p_{j,t}\)\end{equation}
and reject it otherwise. The rule compares the contribution of the decision profile $\vt x_t$ to the expected collective welfare if it is accepted (left hand side) to its contribution if it is rejected (right hand side) and returns the outcome that yields the maximum contribution. This establishes its optimality. It remains to translate this rule into a form decision rule in the form of \eqref{eq:wmr} from which we will be able to deduct the optimal quota and weights. By grouping similar terms, we can rewrite \eqref{eq:comparison} as 

\begin{equation*}
\frac{1-\alpha}{\alpha}\cdot\frac{1+\ell_r}{1+\ell_a}\prod_{i: x_{i,t}=1}\frac{p_{i,t}}{\(1-p_{i,t}\)}>\prod_{j: x_{j,t}=-1}\frac{p_{i,t}}{\(1-p_{j,t}\)}
\end{equation*}

and by taking the natural logarithm
\begin{equation*}
\ln{\(\frac{1-\alpha}{\alpha}\)}+\ln{\(\frac{1+\ell_r}{1+\ell_a}\)}+\sum_{i: x_{i,t}=1} \ln{\(\frac{p_{i,t}}{1-p_{i,t}}\)}>\sum_{j: x_{j,t}=-1} \ln{\(\frac{p_{j,t}}{1-p_{j,t}}\)}
\end{equation*}
where we used that the function $\ln$ is monotone increasing. To proceed, we will use the transformations $y_i:=\(x_i+1\)/2$, so that 
\[y_i=\begin{cases}1, & \text{if } x_i=1,\\ 0, & \text{if } x_i=-1 \end{cases}\]
and $z_i:=\(1-x_i\)/2$, so that 
\[z_i=\begin{cases}1, & \text{if } x_i=-1,\\ 0, & \text{if } x_i=1 \end{cases}\]
Using this notation, we may rewrite the last inequality as 
\[\ln{\(\frac{1-\alpha}{\alpha}\)}+\ln{\(\frac{1+\ell_r}{1+\ell_a}\)}+\sum_{i=1}^n\left[\ln{\(\frac{p_{i,t}}{1-p_{i,t}}\)}y_i -\ln{\(\frac{p_{i,t}}{1-p_{i,t}}\)}z_i\right]>0\]
Now, since $y_i-z_i=x_i$, we obtain that 
\begin{equation}\label{eq:rule}\sum_{i=1}^n\ln{\(\frac{p_{i,t}}{1-p_{i,t}}\)}x_i>-\ln{\(\frac{1-\alpha}{\alpha}\)}-\ln{\(\frac{1+\ell_r}{1+\ell_a}\)}\end{equation}
where the left hand side equals the left hand side of \eqref{eq:wmr} with $w_{i,t}:=\ln{\(\frac{p_{i,t}}{1-p_{i,t}}\)}$. This establishes \eqref{eq:weights}. To obtain \eqref{eq:optimal}, it remains to bring the right hand side of this inequality in the form of \eqref{eq:wmr}. This is equivalent to determining $\overline q_t$ so that the following equality holds
\[\(2\overline q_t-1\)w_t=-\ln{\(\frac{1-\alpha}{\alpha}\)}-\ln{\(\frac{1+\ell_r}{1+\ell_a}\)}\]
Solving for $\overline q_t$ yields \eqref{eq:optimal}.
\end{proof}

\begin{remark}\label{rem:work} 
Based on the findings of \Cref{thm:main} about the optimal quota and weights, cf. cf. \eqref{eq:optimal} and \eqref{eq:weights}, we have the following remarks. 
\begin{itemize}[leftmargin=*, noitemsep]
\item The optimal quota \eqref{eq:optimal} depends on the validators' profiles and hence, it may vary between different time slots $t>0$. Also, it may vary according to the values of the parameters $\alpha, \ell_r,\ell_a$. For instance, the selection $\alpha=1/2$ neutralizes the bias due to assumptions on the distribution of valid versus invalid blocks whereas the selection $\ell_r=\ell_a$ neutralizes the bias due to differences in perceived network costs/rewards. In this way, \Cref{thm:main} maximizes the collective welfare -- or equivalently, the probability of consensus on the correct decision -- by an easily adaptable and \emph{dynamic} decision rule.\par
\item In blockchain applications, it may be desirable to enforce certain restrictions, as for example that the required weighted majority will be no less than $2/3$ of the total weights or that each individual weight will be no less than some threshold value. As \cite[p.332]{Sh84} explains, even in such cases of additional restrictions and/or perturbed assumptions, selecting weights that are proportional (or equal) to the optimal ones \eqref{eq:weights} will improve the probability of a correct outcome compared to unweighted decision making.\par
\item For any voting profile $p\in\(0,1\)$, the optimal weight $w=\ln{\(p/\(1-p\)\)}$ is given in \Cref{fig:work}.
\begin{figure}[!htb]
\centering
\includegraphics{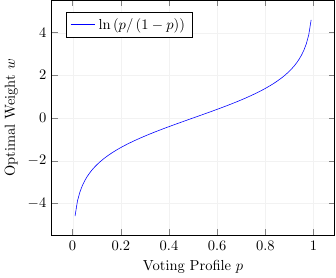}
\caption{Optimal weight $w=\ln{\(p/\(1-p\)\)}$ for any possible voting profile $p\in\(0,1\)$.}
\label{fig:work}
\end{figure}
For voting profiles $p<0.5$, the weight is negative which motivated our selection of suspending validators with profiles less that $0.5$. Intuitively, this means that these validators behave more than $50\%$ of the time not as expected and hence have an adverse impact to the consensus mechanism. Such validators could be even replaced by a random coin (which would be correct $50\%$ of the time). By properly initializing and suspending profiles above the $0.5$ threshold, such mathematically trivial situations can be avoided. This is discussed in more detail in \Cref{sec:implementation}.
\item On the other hand, the optimal weights increase steeply as the voting profiles approach $1$. For instance, a validator with voting profile $p_1=0.9$ has an optimal weight $w_1\approx2.2$, whereas a validator with voting profile $p_2=0.99$ has an optimal weight $w_2\approx4.6$. In concrete terms, this means that a $10\%$ improvement in the validator's voting profile resulted in a more than $100\%$ improvement in the validator's optimal weight. In this way, the weighting mechanism disproportionally punishes even the slightest deviations from the optimal behavior and hence, incentivizes validators to behave correctly as consistently as possible. This feature can be utilized to enforce the socially desired outcome of almost zero failures in the consensus mechanisms. 
\item Finally, note that for applications, \eqref{eq:wmr} can be simplified as follows. First, by dividing both sides with $w_t$, we can normalize the weights to sum up to $1$, i.e. $w'_{i,t}:=w_{i,t}/w_t$ for $i=1,2,\dots,n$. Second, by letting $y_i=\(x_i+1\)/2$ as in the proof of \Cref{thm:main}, with $y_i=1$ if $x_i=1$ and $y_i=0$ if $x_i=-1$, we can rewrite \eqref{eq:wmr} as 
\[\sum_{i=1}^n w'_{i,t}\(2y_i-1\)\ge 2\overline q_t-1\]
or equivalently as 
\[2\sum_{i=1}^n w'_{i,t}y_i-\sum_{i=1}^nw'_{i,t}\ge 2\overline q_t-1\]
Since $\sum_{i=1}^nw'_{i,t}=1$, dividing both sides by $2$ yields the approval condition 
\begin{equation}\label{eq:condition} \sum_{i=1}^n w'_{i,t}y_i\ge \overline q_t\end{equation}
which is more handy than (but equivalent to) \eqref{eq:wmr} and can be hence used for applications. This is done in \Cref{ex:continued} and \Cref{ex:obvious} below.
\end{itemize}
\end{remark}

\addtocounter{theorem}{-2}
\begin{example}[Continued]\label{ex:continued}
Assume for simplicity that $\alpha=1/2$ and that $\ell_r=\ell_a$. In this case, the optimal rule is simple weighted majority, i.e., $\overline q=1/2$, or by substituting in \eqref{eq:wmr}, $f_{\overline q}\(\vt x\)=1$ if $\sum_{i=1}^nw_{i}x_{i}\ge0$ (dependence on $t$ is omitted to simplify the notation). Using \eqref{eq:weights} and normalizing the weights to sum up to $1$, we obtain $\vt w=\(0.392,0.392,0.072,0.072,0.072\)$. With these choices, the probability of approving a valid block is approximately $0.927$ as shown in \Cref{tab:scale1}. \begin{table}[!htb]
\centering
\begin{tabular}{@{}llrrrrrrr@{}}
\midrule
\textbf{Profiles} & \textbf{Weights}&\multicolumn{7}{l@{}}{Decision profiles $\vt x$, with $f_{\overline q}\(\vt x\)=1$.}\\\midrule
0.9 & 0.392 & 1 & 1 & 1 & 1 & -1 & -1 & -1   \\
0.9 & 0.392 & 1 & -1 & -1 & -1 & 1 & 1 & 1  \\
0.6 & 0.072 & -1,1 & 1 & 1 & -1,1 & 1 & 1 & -1,1\\
0.6 & 0.072 & -1,1 & 1 & -1,1 & 1 & 1 & -1,1 & 1\\
0.6 & 0.072 & -1,1 & -1,1 & 1 & 1 & -1,1 & 1 & 1 \\\midrule
\multicolumn{2}{@{}r}{$\sum_{i=1}^5{w_ix_i}\phantom{22}$} & $>0$ & $>0$ & $>0$ & $>0$ & $>0$ & $>0$ &$>0$\\\midrule  
\end{tabular} 
\caption{Winning profiles under the optimal decision scheme.} 
\label{tab:scale1}
\end{table}
The weighting has resulted in a voting rule which approves a block if the two high-profile (0.9) validators agree on its validity (first column of decision profiles). The votes of the remaining va\-lidators come into play only if these two disagree. In this case, it suffices that a majority ($2$ out of $3$) of the remaining validators approve the block (remaining $6$ columns of decision profiles). The probabilities of decision profiles $\vt x_t$ with $f_{\overline q_t}\(\vt x_t\)=1$ sum up to approximately $0.927$ as claimed

\[0.9^2+\dbinom{2}{1}0.9\cdot0.1\(\dbinom{3}{3}0.6^3+\dbinom{3}{2}0.6^2\cdot0.4\)\approx 0.927\]
The weights $\vt v_t=\(1/3,1/3,1/9,1/9,1/9\)$ yield the same result and are in that sense, equivalent to the optimal ones. In fact, there may be several other optimal choices. As mentioned above, if we impose additional restrictions such as a de facto $2/3$-weighted majority, the weights given by \eqref{eq:weights} may not be optimal anymore. However, as \cite{Sh84} remarks, they will still yield an improved probability compared to the unscaled case. In this example, a similar calculation as in \Cref{tab:scale1} shows that the probability of reaching the $2/3$-majority is $0.81$ with the $\vt w_t$ weights and approximately $0.85$ with the $\vt v_t$ weights.
\end{example}
\addtocounter{theorem}{+2} 

The \textit{ProcessSlot} procedure in \Cref{alg:wvoting} executes the weighted voting algorithm in each slot $t$. First, the validators and a block proposer are selected according to the underlying PoS protocol (as part of the \textit{PoSGetCommittee} procedure). The committee can be a random sample or deterministic subset taken from the entire validator set, or it can be the validator set in its entirety. For example, in Delegated Proof of Stake as implemented in TRON and EOS.IO, \textit{PoSGetCommittee} would return the set of superdelegates. Next, the block proposed by the block proposer is retrieved and stored in the variable $B$ -- if no valid block has been proposed, then we assume that the value \textit{null} is stored in $B$. Once the committee has been formed and a valid block has been proposed, we retrieve for each committee member the block that they voted for using the \textit{CollectVotes} procedure. We then check in \textit{ProcessVotes} whether the validators voted for $B$ -- if so, we record their vote as a $1$, and as a $-1$ otherwise. Next, the weighted voting procedure\footnote{The function that determines the optimal quota is given in a general form (line 26) to account for implementations with a rule different from the one proposed in \eqref{eq:optimal}, as e.g., a constant $2/3$-majority rule.} is applied independent of the underlying PoS protocol -- if a sufficient number of committee members voted for the $B$, then it is committed and appended to the blockchain. In this way, it contributes towards a more efficient and fair consensus mechanism while remaining decoupled from the PoS selection mechanism. This results in a two-layered scheme that on the one hand improves the efficiency of the consensus mechanism of the PoS protocol and on the other hand can be implemented and reverted with minimal cost to the users.

\newcommand{\vote}{\textit{vote}}
\newcommand{\B}{\mathcal{B}}
\begin{algorithm}[!hbt]
\caption{Weighted Voting in Committees}\label{alg:wvoting}
\begin{algorithmic}[1]
\Procedure{ProcessSlot}{$t,(p(i))_{i \in I}$}
\State $(v_1,\ldots,v_n) \gets $ \Call{PoSGetCommittee}{$t$}
\State $B \gets $ \Call{PoSProposeBlock}{$(v_1,\ldots,v_n), t$}
	\If{$B \neq \textit{null}$} 
	\State $(x_1,\ldots,x_n) \gets$ \Call{CollectVotes}{$(v_1,\ldots,v_n),B$}
	\State $(\vote(1),\ldots,\vote(n)) \gets$ \Call{ProcessVotes}{$t$}
	\If{\Call{WeightedVoting}{$(p(v_1),\ldots,p(v_n), (\vote(1), \ldots,\vote(n))$}}
		\State \Call{CommitBlock}{B}
	\EndIf
\EndIf
\EndProcedure
\Procedure{ProcessVotes}{$(x_1,\ldots,x_n),B$}
\For {$i \in \{1,\ldots,n\}$} 
\If {$x_i==B$} $\Comment{x_i=i\text{'s vote message}}$ \State {$\vote(i) \gets 1$}
\Else \State{$\vote(i) \gets -1$}
\EndIf
\EndFor
\EndProcedure
\Procedure{WeightedVoting}{$\(p_{i}\)_{i\in N}, \(\text{vote}\(i\)\)_{i\in N}$}
\For {$i \in \{1,\ldots,n\}$}
\State {$w_i \gets \ln{\(p_i\)}-\ln{\(1-p_i\)}$} \label{ln: logs}
\State {sum $\gets$ sum $+w_i\cdot$vote$\(i\)$}
\EndFor
\State $q=\Call{OptimalQuota}{\(w_i\)_{i\in N},\alpha,\ell_r,\ell_a}$
\Return {sum $\ge 2q-1$}
\EndProcedure
\end{algorithmic}
\end{algorithm}

\section{Multiplicative Weights Update Algorithm}\label{sec:mwua}
We now turn to one of the main challenges of implementing the voting profile scheme in the dynamic blockchain setting, which is the \emph{update} of voting profiles after every time slot. The updating scheme may considerably vary depending on the desired result: e.g., in \cite{Yu18}, a reputation system is proposed in which reputation increases according to a \emph{sigmoid function} when nodes vote correctly and decreases sharply (to $0$) after a single violation. While this approach adheres to intuition and comes with certain merits, practical applications may call for more flexibility in the updates. \par
To develop a parametrizable scheme, we utilize the approach of \cite{Ar12} who generalize the standard \emph{multiplicative weights update} (MWU) algorithm to a non-binary setting in which experts' scores are revised according to the impact of their decision on the social welfare. Using \Cref{tab:welfare}, the corresponding MWU algorithm for the present application is given in \Cref{tab:updates}.
\begin{table}[!htb]
\centering
\begin{tabular}{@{}lrll@{}}
& & \multicolumn{2}{c}{\textbf{Proposed Block $B_t$}}\\[0.2cm]
&& Valid (1) & Invalid (-1)\\
\cmidrule{2-4}
\multirow{2}{*}{\textbf{Committee\hspace{4pt}}}& Approve & $p_{i,t}\(1+\delta\)$ & $p_{i,t}\(1-\delta\)^{\ell_a}$\\
\cmidrule{2-4}
& Reject & $p_{i,t}\(1-\delta\)^{\ell_r}$ & $p_{i,t}\(1+\delta\)$\\
\cmidrule{2-4}
\end{tabular}
\caption{Multiplicative Weights Updates.}
\label{tab:updates}
\end{table}
Here, $\delta>0$ is a (small) number subject to the exact protocol parametrization. Apart from the efficiency properties of the MWU algorithm that are well known, see \cite{Ar12,Pa17,Ba18} and references therein, this scheme can leverage the prevailing network conditions and adjust the updates accordingly. This can be realized by replacing $\delta$ and/or $\ell_r,\ell_a$ with \emph{sequences of updating parameters}, $\(\delta_t, \ell_{r,t},\ell_{a,t}\)_{t>0}$. The proposed scheme for updating validators' voting profiles is given for completeness in \Cref{alg:profiles}. If a new node comes online ($T$ is the slot when it does), it runs the \textit{MWInitialize} procedure to get the voting profiles consistent with the rest of the network. Once this has been completed, the function \textit{MWUpdate} is executed every slot to update the voting profiles. If a validator's profile drops below 0.5, they are removed from the set -- this happens in line~\ref{ln:suspend}. Similarly, new validators may be added in each round, depending on the protocol implementation. This processed in line~\ref{ln:add}. The \emph{min} in line~\ref{ln:epsilon} ensures that the voting strength of a single validator cannot go to infinity (after all, if $p(i)$ approaches 1 then $w_i$ approaches $\infty$ as per line~\ref{ln: logs} in \Cref{alg:wvoting}).

\begin{algorithm}[!htb]
\caption{Validators' Voting Profiles}\label{alg:profiles}
\begin{algorithmic}[1]
\Procedure{MWInitialize}{$I_0,\delta,\ell_r,\ell_a, T$}
\State $I \gets I_{0}$
\State $t \gets 1$
\For{$i \in I$}
	$p(i) \gets 0.5$
\EndFor
\While{$t \leq T$}
	\State $(I, (p(i))_{i \in I}) \gets$ \Call{MWUpdate}{$I,(p(i))_{i \in I},t,\delta,\ell_r,\ell_a$}
	\State $t \gets t+1$
\EndWhile
\EndProcedure
\Procedure{MWUpdate}{$I,(p(i))_{i \in I},t,\delta,\ell_r,\ell_a$}
\State $(v_1,\ldots,v_n) \gets $ \Call{PoSGetCommittee}{$t$}
\State $B \gets $ \Call{PoSProposeBlock}{$(v_1,\ldots,v_n), t$}
\State $(x_1,\ldots,x_n) \gets$ \Call{CollectVotes}{$(v_1,\ldots,v_n),B$}
\For{$i \in I$}
	\If {$v_i=B$}
		\State {$p(v_i) \gets \min{\{1-\epsilon,p(v_i) \(1+\delta\)\}}$} $ 	
		\Comment{\epsilon=10^{-5}}$ 
		\label{ln:epsilon}
	\ElsIf {$x_i = $ \textit{null}}
		\State {$p(v_i) \gets p(v_i) \(1-\delta\)^{\ell_r}$}
	\Else 
		\State {$p(v_i) \gets p(v_i) \(1-\delta\)^{\ell_a}$}  
	\EndIf
\EndFor
\If {$p(v_i)< 0.5$ and $t\ge g$} $\Comment{\text{$g$ is the grace period }}$
\State $I \gets I \backslash \{v_i\}$ $\Comment{\text{suspend validator $i$}}$ \label{ln:suspend}
\EndIf
\State $(u_1,\ldots,u_{k}) \gets$ \Call{GetNewValidatorsInSlot}{$t$}
\For{$i \in \{1,\ldots,k\}$}
	\State $I \gets I \cup \{u_i\}$ \label{ln:add}
	\State $p(u_i) \gets 0.5$
\EndFor
\Return $(I, (p(i))_{i \in I})$
\EndProcedure
\end{algorithmic}\vspace{0cm}
\end{algorithm} 

\subsection{Tolerance of Validator Faulty Behavior}\label{sec:faulty}
Before testing the weighted voting and MWU algorithms numerically in the \Cref{sub:numerical}, we study analytically the effect of the updating scheme on the validators' profiles. Due to the parametric nature of the proposed algorithm, the question that naturally arises from the validators' perspective is the following: \emph{what is the threshold of faulty behavior that can be sustained by the algorithm without harming my profile?} In other words, what is the \emph{minimum} percentage of time that a validator should strive to behave properly in order to sustain or improve their voting profile. \par
To explore this question, we first introduce some notation. Let $q\in[0,1]$ denote the percentage of time -- measured in time slots, cf. \Cref{sec:model} -- that a (tagged) validator behaves according to the protocol, i.e., votes as expected by default, and let $q_1\in[0,1]$ denote the percentage of time that this validator does not vote on valid blocks (e.g., because they accidentally drop offline when it is their turn to vote). Accordingly, it must be that $q+q_1\le 1$, with $1-q-q_1\in[0,1]$ denoting the percentage of time that the validator votes on the approval of invalid blocks. Finally, let $p_0$ denote the starting profile of the validator at the period under scrutiny and for any time frame $T>0$, let $p_T$ denote the validator's profile at time $T$. Our aim is to obtain necessary and sufficient conditions on $q$ and $q_1$, so that the validator will improve or sustain their initial voting profile, i.e., $p_T\ge p_0$, where $p_T$ depends on the validator's voting behavior via $q$ and $q_1$. Typically, we will be interested in large time periods, $T\gg0$, that properly reflect the validator's voting behavior as expressed by $q$ and $q_1$. However, the result of \Cref{thm:trivial} remains correct for any $T>0$.

\begin{theorem}\label{thm:trivial} Consider the updating scheme in \Cref{tab:updates} with parameters $\delta,\ell_r,\ell_a>0$ such that $\ell_r\ll \ell_a$ and $\delta\in\(0,1\)$, (typically $\delta$ is a small number close to $0$). A validator who follows the protocol $q\cdot100\%$ of the time and does not vote on valid blocks $q_1\cdot 100\%$ of the time, with $q,q_1\in[0,1]$ and $q+q_1\le 1$, will improve upon their initial profile if and only if 
\begin{equation}\label{eq:trivial} q\ge c_1\(1-c_2q_1\)\end{equation}
where $c_1:=\(1-\frac{\ln{\(1+\delta\)}}{\ell_a\ln{\(1-\delta\)}}\)^{-1}$ and $c_2:=1-\ell_r/\ell_a$ are positive constants that depend on the updating parameters $\delta,\ell_r$ and $\ell_a$.  
\end{theorem}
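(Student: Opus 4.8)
The plan is to track the validator's profile multiplicatively over the $T$ time slots and reduce the inequality $p_T \ge p_0$ to a condition on the exponents. Over a time frame of $T$ slots, the validator votes correctly in (roughly) $qT$ slots, fails to vote on valid blocks in $q_1 T$ slots, and approves invalid blocks in $(1-q-q_1)T$ slots. Reading off \Cref{tab:updates}: correct votes multiply the profile by $(1+\delta)$, abstentions on valid blocks multiply by $(1-\delta)^{\ell_r}$ (a rejection of a valid block), and approvals of invalid blocks multiply by $(1-\delta)^{\ell_a}$. Since the updates are multiplicative and commute, we get
\[
p_T = p_0 \,(1+\delta)^{qT}\,(1-\delta)^{\ell_r q_1 T}\,(1-\delta)^{\ell_a (1-q-q_1) T}.
\]
(For simplicity I would treat $qT, q_1T$ as integers; otherwise round and note the error vanishes as $T\to\infty$, and in any case the statement claims exactness for all $T$ by interpreting these as the nominal counts.)

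Next I would take logarithms. The condition $p_T \ge p_0$ is equivalent to
\[
qT\ln(1+\delta) + \ell_r q_1 T \ln(1-\delta) + \ell_a (1-q-q_1) T \ln(1-\delta) \ge 0.
\]
Dividing by $T>0$ and grouping the $\ln(1-\delta)$ terms gives
\[
q\ln(1+\delta) + \ln(1-\delta)\bigl[\ell_r q_1 + \ell_a(1-q-q_1)\bigr] \ge 0.
\]
Here $\ln(1+\delta)>0$ and $\ln(1-\delta)<0$ because $\delta\in(0,1)$, so the sign bookkeeping matters but is routine. I would expand $\ell_r q_1 + \ell_a - \ell_a q - \ell_a q_1 = \ell_a - \ell_a q - (\ell_a-\ell_r)q_1$, substitute, and isolate $q$. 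Collecting the coefficient of $q$: it is $\ln(1+\delta) - \ell_a\ln(1-\delta)$, which is strictly positive (sum of a positive term and $-\ell_a$ times a negative term), so dividing by it preserves the inequality direction.

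After isolating $q$, the inequality becomes
\[
q \ge \frac{-\ell_a\ln(1-\delta) + (\ell_a-\ell_r)q_1\ln(1-\delta)}{\ln(1+\delta) - \ell_a\ln(1-\delta)},
\]
and I would factor $-\ell_a\ln(1-\delta)$ out of the numerator (legitimate since $\ln(1-\delta)\neq 0$) to get the numerator as $-\ell_a\ln(1-\delta)\bigl(1 - (1-\ell_r/\ell_a)q_1\bigr)$. Dividing numerator and denominator by $-\ell_a\ln(1-\delta)$ turns the fraction into
\[
\frac{1 - (1-\ell_r/\ell_a)q_1}{1 - \frac{\ln(1+\delta)}{\ell_a\ln(1-\delta)}},
\]
which is exactly $c_1\bigl(1 - c_2 q_1\bigr)$ with $c_1, c_2$ as defined in the statement. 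Finally I would verify positivity of the constants: $c_2 = 1 - \ell_r/\ell_a > 0$ since $\ell_r \ll \ell_a$ (in particular $\ell_r < \ell_a$), and for $c_1$ note $\ln(1+\delta)/(\ell_a\ln(1-\delta)) < 0$, so the denominator $1 - (\text{negative})$ exceeds $1$, making $c_1 \in (0,1)$ in particular positive. The ``if and only if'' is immediate because every step above is an equivalence.

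I do not expect a genuine obstacle here — the result is essentially an algebraic rearrangement once the multiplicative update is written out. The one point requiring mild care is the treatment of $qT$ and $q_1T$ when they are not integers; the cleanest resolution is to observe that the theorem's inequality is homogeneous in the slot counts, so the conclusion is stated in terms of the limiting frequencies $q, q_1$ and holds exactly for the idealized counts $qT, q_1T$, with any integer-rounding discrepancy being $O(1/T)$ and hence irrelevant in the regime $T\gg 0$ that the paper cares about. The secondary bit of bookkeeping is keeping the signs of $\ln(1+\delta)$ and $\ln(1-\delta)$ straight when dividing through, which I have flagged above.
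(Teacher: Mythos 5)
Your proposal is correct and follows essentially the same route as the paper: write $p_T = p_0(1+\delta)^{qT}(1-\delta)^{\ell_r q_1 T}(1-\delta)^{\ell_a(1-q-q_1)T}$, take logarithms, and solve for $q$, with the same sign bookkeeping for $\ln(1\pm\delta)$ and the same positivity check for $c_1, c_2$. Your explicit algebraic rearrangement and the remark on rounding $qT, q_1T$ merely spell out steps the paper leaves implicit.
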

\begin{proof} The positivity of $c_1$ and $c_2$ follows directly from the assumptions. Indeed, $\ell_r<\ell_a$ implies that $\ell_r/\ell_a<1$ and hence, that $c_2>0$. Similarly, $\delta\in\(0,1\)$ implies that $\ln{\(1-\delta\)}<0$ and $\ln{\(1+\delta\)}>0$ and hence, that $c_1>0$ as the inverse of the sum of two positive terms (in particular, $c_1<1$). \par
Concerning \eqref{eq:trivial}, let $p_0$ denote the validators initial voting profile and let $p_T$ denote the validator's voting profile after $T$ time slots. Then, given that $q$ denotes the fraction of time slots in which the validator voted correctly and $q_1$ the fraction of time slots in which the validator did not approve a valid block, we have that 
\begin{align*}p_T&=\(1+\delta\)^{qT}\(\(1-\delta\)^{\ell_r}\)^{q_1T}\(\(1-\delta\)^{\ell_a}\)^{\(1-q-q_1\)T}p_0=\(\(1+\delta\)^q\(1-\delta\)^{\(\ell_rq_1+\ell_a\(1-q-q_1\)\)}\)^Tp_0\end{align*}
Hence, $p_T\ge p_0$ if and only if $\(\(1+\delta\)^q\(1-\delta\)^{\(\ell_rq_1+\ell_a\(1-q-q_1\)\)}\)^T\ge1$. Since, the basis term is positive by assumption, we may take the natural logarithm of both sides to obtain the equivalent condition 
\[q\ln{\(1+\delta\)}+\(\ell_rq_1+\ell_a\(1-q-q_1\)\)\ln{\(1-\delta\)}>0\]
Using that $\(1+\delta\)>1$ and some standard algebraic operations, we can solve the last inequality for $q$ to obtain the necessary and sufficient condition
\[q\ge \frac1{1-\frac{\ln{\(1+\delta\)}}{\ell_a\ln{\(1-\delta\)}}}\cdot\(1-q_1\(1-\ell_r/\ell_a\)\)\] 
which concludes the proof.
\end{proof}

\begin{remark}[Policy Implications]\label{rem:trivial}
The result of \Cref{thm:trivial} can be utilized in the design policy of the updating algorithm to enforce certain requirements in the validators' realized correct behavior, $q$. To see this, we first observe that $c_1$ can be rewritten as 
\[c_1=\ell_a\cdot\(\ell_a-\frac{\ln{\(1+\delta\)}}{\ln{\(1-\delta\)}}\)^{-1}\]
Since $\delta$ is typically a small positive number, e.g. $\delta=10^{-3}$ and $\ell_a$ is a positive number large enough to represent the potential losses from appending an invalid block to the blockchain, $c_1$ gets arbitrary close to, but strictly less than $1$ for increasing values of $\ell_a$. Moreover, for large values of $\ell_a$ and small values of $\ell_r$, $c_2:=1-\ell_r/\ell_a$ is also close to, but strictly less than $1$. This implies that $1-\varepsilon<c_1c_2<1$ for some relatively small $\varepsilon>0$. Hence, using that $q+q_1\le 1$, we may rewrite \eqref{eq:trivial} as

\begin{equation*}
q+c_1c_2q_q\ge c_q \iff \(1-c_1c_2\)q+c_1c_2\(q+q_1\)\ge c_1 \implies \(1-c_1c_2\)q+c_1c_2\ge c_1\implies q\ge \frac{c_1-c_1c_2}{1-c_1c_2}
\end{equation*}
This gives a lower bound on $q$ that does not depend on $q_1$. It is obvious that if the parameters $\delta, \ell_r,\ell_a$ are selected in a way to push $c_1$ close to $1$, then a validator will have to vote consistently correctly to retain or improve their voting profile. In general, this shows the advantages of the parametric approach and the flexibility that it conveys to the policy makers or consensus designers (in case of public, permissionless blockchains) to enforce desired outcomes or even to dynamically adapt the consensus algorithm in response to prevailing network conditions. 
\end{remark}

\section{Numerical Tests \& Practical Use Cases}
\label{sub:numerical}
To visualize the proposed scheme, we study some instantiations of the weighted voting and MWU algorithms. Before going into the details of each case, the following hold in general.
\begin{itemize}[leftmargin=*]
\item \textbf{Adversarial model:} an adversary blocks $v\%$ of the votes, either by abstaining (accidentally or intentionally) or by censoring other validators. To demonstrate the efficiency of the model in extreme -- or worst case -- conditions, we show simulations for $v=40\%, 50\%$ and $60\%$. For lower values of $v$, i.e., for less adversarial power, the results are similar (better) and thus omitted.
\item \textbf{Parameter choice:} $\alpha$, the prior probability that a proposed block is invalid, is set to $1/2$. This choice neutralizes the bias due to assumptions on the distribution of valid-invalid blocks in \eqref{eq:optimal} and corresponds to the most general model. To capture that costs from approval of invalid blocks are higher than costs from rejection of valid blocks, we select $\ell_a=12 \gg 10^{-2}=\ell_r$. The resulting graphs (recovery times) are robust in a wide range of these parameters. Yet, very low values of $\ell_a$ may lead to unwanted behavior: validators who are commonly off-line may still improve their profiles despite not voting on valid blocks. Finally, the updating parameter $\delta$ has been chosen according to the related literature \cite{Ar12, Ba18}. Different values of $\delta$ affect the rate of convergence of the profiles.
\item \textbf{Simulation environment:} The numerical results have been established in MATLAB R2018b. The simulations validate the algorithms for resuming consensus, cf. \Cref{fig:sim_60,fig:sim_50} and for updating a validator's profile, cf. \Cref{fig:evolution_1,fig:evolution_2}. Further issues related to scalability and computational complexity on a proper -- or simulated -- blockchain are discussed in \Cref{sec:implementation}. 
\end{itemize} 
\Cref{fig:sim_60} illustrates a scenario with an adversary blocking $40\%$ of the votes. At the start of the attack, all $n$ validators\footnote{The exact number does not change the results. For simplicity, we used $n=100$.} have a voting profile $p=0.9$. We examine two choices of the updating parameter $\delta=10^{-3}$ (red) and $\delta=2\times 10^{-3}$ (blue). In both cases, $\alpha=1/2$ and $\ell_r=10^{-2}, \ell_a=12$.
\begin{figure}[!htb]\vspace{-0.2cm}
\centering
\includegraphics{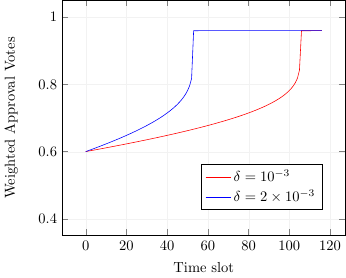}
\caption{Time slots to resume consensus on valid blocks with $40\%$ non-voting nodes under mild (red line) and aggressive (blue line) updating parameter $\delta$. In both cases, $\ell_r=10^{-2}$.}
\label{fig:sim_60}\vspace*{-0.2cm}
\end{figure} The depicted curves indicate that the weighted approval votes (vertical axis) rise above the $2/3$ majority threshold\footnote{While the optimal quota, cf. \eqref{eq:optimal}, remains slightly above $1/2$, we consider the $2/3$-majority rule which is currently implemented in PoS protocols.} for both cases. The pace is different and depends on the selection of $\delta$. The sharp bends in both lines correspond to the point in which the scores of voting validators numerically reach $1$. After this point, the majority of the voting validators increases at a very slow pace which is a desirable property that allows for a more smooth recovery in case that the abstaining $40\%$ resume voting. Specifically, the exact formula for updating their profiles is 
\[p_{i,t+1}=\min{\{1-10^{-15},\max{\{0.5,p''_{i,t+1}\}}\}}\]
where $p''_{i,t+1}$ is the value calculated by \Cref{tab:updates}. This formula ensures that $p_{i,t}$ remains in $[0.5,1)$. 
\begin{figure}[!hbt]
\centering
\includegraphics{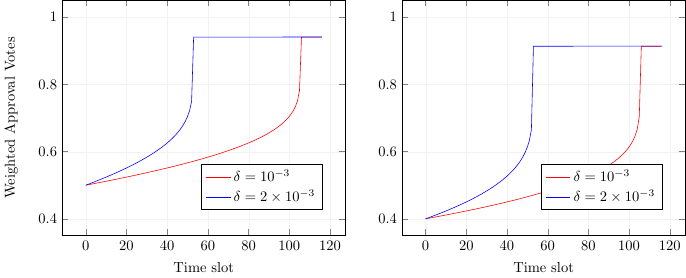}
\caption{Time slots to resume consensus on valid blocks with $50\%$ (left panel) and $60\%$ (right panel) non-voting nodes under mild (red line) and aggressive (blue line) updating parameter $\delta$. In both cases, $\ell_r=10^{-2}$.}
\label{fig:sim_50}\vspace*{-0.1cm}
\end{figure}
\begin{remark}
To avoid the numerical instability at $p=1$ -- for which the denominator at $\ln{\(1/\(1-p\)\)}$ becomes equal to $0$ -- we need to restrict profiles away from $1$. Specifically, to implement and test this scheme, we require that voting profiles do not exceed $1-\epsilon$ for some very small $\epsilon>0$. However, due to the steep increase of the optimal weights that was mentioned in \Cref{rem:work}, the actual value of the imposed threshold $\epsilon$ matters and it should be chosen to be numerically close to $0$. Moreover, again due to the disproportional increase of the optimal weights for profiles close to $1-\epsilon$, validators who achieve this threshold (and remain at it by continuing to vote correctly after reaching it), have disproportionally higher power to influence the consensus outcome.
\end{remark}
As a comparison, \Cref{fig:sim_50} illustrates the process of resuming approval of blocks in the same scenario but with an adversary controlling $50\%$ of the stake (left panel) and $60\%$ of the stake (right panel). The results indicate a very similar recovery pattern, cf. \Cref{fig:sim_60}, independently of the initial stake of the non-voting validators. The evolution of a validator's voting profile is illustrated in \Cref{fig:evolution_1}. In the depicted scenario, the validator's initial profile is $0.9$. The validator votes correctly $80\%$ of the time, but drops $10\%$ of the time off-line, and votes on invalid blocks another $10\%$ of the time. Again, we examine two cases for different values of the update parameter, $\delta=2\times 10^{-2}$ (left panel) and $\delta=10^{-2}$ (right panel).
\begin{figure}[!htb]
\centering
\includegraphics{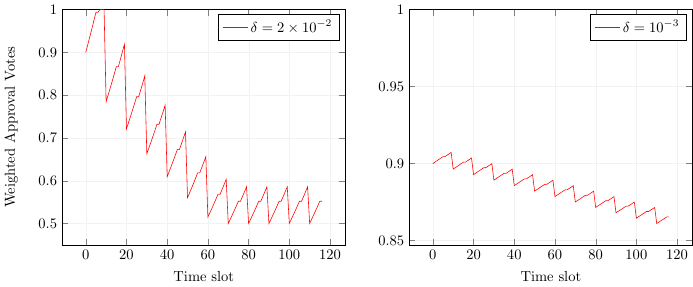}
\caption{Evolution of a validator's voting profile who periodically drops offline and periodically votes on an invalid block for different values of parameter $\delta$. In both panels, the validator's initial score is $0.9$, $\ell_a=12$ and $\ell_r=10^{-2}$.}
\label{fig:evolution_1}\vspace*{-0.2cm}
\end{figure}
While the patterns differ, in both depicted panels the voting profile falls due to the regular incorrect votes. We remark, that lower values of $\ell_a$ would result in upwards sloping curves (not depicted here) implying that a validator could regularly vote incorrectly and still improve their voting profile. Similarly, higher values of $\delta$ would allow validators to quickly recover their profiles after pitfalls which is an undesirable property. The depicted patterns in the evolution of the voting profile are robust in the choice of the initial score and the value of $\ell_r$.\par 
Finally, \Cref{fig:evolution_2} extends the above scenarios to a period in which the validator resumes proper voting. Specifically, we assume that the validator votes correctly on every block except of occasional time slots -- less than $10\%$ of the time -- in which they go offline. Again, the two panels correspond to different values of $\delta$.
\begin{figure}[!htb]
\centering
\includegraphics{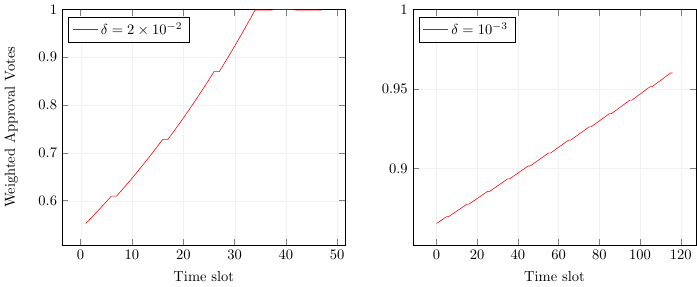}
\caption{Recovery of the validator's voting profile after resuming proper voting (with only occasional offline-drops) in the scenarios of \Cref{fig:evolution_1}. Recovery exhibits linear pattern for both choices of $\delta$. It is fast for $\delta=2\times 10^{-2}$ (aggresive adjustment) and slow for $\delta=10^{-3}$ (mild adjustments).}
\label{fig:evolution_2}\vspace*{-0.1cm}
\end{figure}
In both cases, the pattern is linear (the sharp bends correspond to the occasional drops) with a slope that can be adjusted by the choice of $\delta$. In sum, the simulations support the versatility of the proposed scheme and leave the exact parametrization (static or dynamic) subject to each protocol's implementation and scope.

\subsection{Use Case: Ethereum Blockchain with Casper FFG}\label{sub:use1}
To further test the proposed scheme in a potential practical scenario, we consider the implementation of weighted voting in Ethereum's Casper the Friendly Finality Gadget (FFG) as a PoS overlay on top of the PoW main chain\cite{Bu19}. This is a hybrid consensus system in which, roughly, PoS validators -- stakers -- vote to confirm (validate) blocks that have been proposed by \qt{conventional} PoW miners at regularly-occurring \emph{checkpoints}. To become a validator, a user has to \emph{deposit} a chosen amount of ETH, Ethereum's native cryptocurrency. If more than two thirds of the validators (where each validator is weighted proportionally to the size of their deposit) vote to approve a given checkpoint block, then it is considered \emph{justified}, and if two checkpoint blocks in a row are justified then the first block is considered \emph{finalized}. Ethereum nodes will never drop finalized blocks during a chain reorganization, so finalization provides an additional layer of security to the users. 
If more than one third of the validators go offline or are subjected to a network partition or eclipse attack, then checkpoints cannot be finalized. The system eventually recovers in the following way: after each checkpoint, those validators who did not vote are \emph{penalized} which means that their fraction of the total deposit decreases. Eventually, their deposit fraction will fall below one third -- hence, the voting validators regain a two-thirds supermajority and the ability to finalize. Of course, it would be harsh on the non-voting validators if their deposits were shrunk quickly during a network partition on eclipse attack, which would by itself disincentivize potential validators. The penalties are therefore initially very mild, but they become harsher over time. This is illustrated in \Cref{fig:ether}.
\par
\begin{figure}[!htb]
\centering
\includegraphics{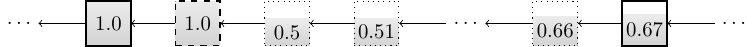}\\[0.1cm]
\caption{Illustration of Casper FFG. The boxes represent checkpoint blocks, and each value inside a box represents the fraction of validators that voted for that checkpoint. Thick edges around a box indicate that the checkpoint is finalized, dashed edges indicate mere justification, and a dotted edge a checkpoint that is neither.  If a considerable fraction (in this case $50\%$) of validators suddenly go offline, then finalization stalls. In Casper FFG, the protocol is eventually recovered by gradually slashing the deposits of offline validators. Using weighted voting, this can be achieved without causing financial losses to the validators.
For more details of the liveness of the Hybrid Casper proposal see \cite{Bu19}.}
\label{fig:ether}
\end{figure}
The main contribution of the currently proposed weighted voting scheme in Ethereum's Hybrid Casper Protocol setting is that it enables the protocol to recover without hurting the deposits of the validators. Instead, validators who come back online eventually regain their voting profile without any financial losses. This also allows for faster recovery during network partitions: although network partitions that last for more than a day are extremely rare, a quick recovery that is achieved by slashing the deposits of offline validators may discourage potential validators from depositing. 
\par
The practical advantages of weighted voting are achieved without affecting either the underlying PoW blockchain or the PoS checkpoint finalization protocol. Instead, it operates within the selected committees and hence affects in a minimal way any suggested/existing scheme that is functional on the blockchain. This means that this scheme can be launched and reverted with minimal impact on client implementations and updates.

\subsection{Use Case: Incentives and Risk Mitigation in Proof of Stake Protocols}\label{sub:use2}
Several current PoS proposals including Ouroboros \cite{ouro17,Bru18}, Casper FFG \cite{Bu1811,Bu1810}, and Delegated Proof of Stake (DPOS) as implemented in EOS.IO, suggest the following properties.
\begin{description}[leftmargin=*,noitemsep]
\item[P1:] a limited number of pools or committee members, which is achieved through a minimum deposit for each potential validator.
\item[P2:] a fixed period (number of protocol epochs) of time for which the staked deposit will remain locked. 
\end{description}
As we discuss in the following, weighted voting and updating schemes become particularly relevant under these conditions. Up to now, we have assumed that the PoS reward mechanism is retained, i.e., that validators are rewarded proportionally to their stake. However, if there are large discrepancies between the voting profiles, validators with high voting profiles contribute more to the consensus than validators with lower profiles. This may raise issues about the distribution of staking rewards and the adopted reward scheme. \par 
To deal with such issues in a similar setting of decentralized multi-agent decision-making, \cite{Ba08a} and \cite{Az09} study incentives under the conventional reward scheme that gives to the participating agents their Shapley \cite{Sh53,Sh54} or Banzhaf values \cite{Ba65,Du79}. In blockchain applications, the relevant question to address is whether the decision scheme together with the reward scheme create incentives for validators to merge, split or annex their deposits to larger pools\cite{Az09}. \cite{Ba08a} quantify the potential profit from creating various identities to participate in the weighted voting mechanism and show that given the number $n$ of participating agents, a Sybil attack cannot earn to a potential adversary more than $2n/\(n+1\)\approx 2$ times their initial values. Moreover, both \cite{Ba08a} and \cite{Az09} demonstrate that it is computational difficult, i.e., NP-hard, to find such profitable manipulations (mergers or splitters). \par
These findings are promising and suggest that reward schemes that have been developed in the context of traditional coalitional games can be used to incentivize proper behavior in the consensus mechanisms of blockchain protocols. By using properties P1.-P2. as above, we significantly mitigate the related risks. Indeed, setting a number of allowed validating pools (P1) together with a large enough minimum required deposit increase the difficulty for splitters. Yet, taking these precautions to the other extreme takes its toll on decentralization and hence such measurements should be exercised to a limited extent. In any case, the fixed period for which deposits remain locked (P2), seems to significantly reduce the potential for movements in the short run at no significant tradeoffs with other blockchain protocol design principles.

\subsubsection*{Eclipse Attacks}
The weighted voting scheme is engineered to accelerate recovery of consensus and prevent the blockchain from stalling when a fraction of validators is off-line or in general does not vote. Validators that continue to vote increase their voting profiles and hence, their power to influence the consensus outcome, whereas the voting profiles of non-voting validators deteriorate. Furthermore each validator's optimal weight depends only on that validator's voting behavior (or profile) and hence, can be computed independently of the committee in which this validator participates. However, this is not true for the \emph{relative power} of each validator in each committee. An example is given below. 
\begin{example} \label{ex:obvious}Consider a validator $i=1$ with profile $p_{1,t}=0.99$ and two different committees:
\begin{description}[leftmargin=*,noitemsep]
\item[Committee in $t=1$:] $n=10$ validators with voting profiles, $\vt p_1=\(0.99, 0.7,\dots, 0.7\)$, 
\item[Committee in $t=2$:] $n=10$ validators with voting profiles, $\vt p_2=\(0.99, 0.95,\dots, 0.95\)$.
\end{description}
As in the numerical tests, we will assume that $\ell_r=10^{-2},\ell_a=12$ and $\alpha=1/2$ for both committees. Applying \eqref{eq:optimal} on these values yields the optimal quotas $\overline q_1=0.604$ for $t=1$ and $\overline q_2=0.54$ for $t=2$. The unnormalized optimal weight of validator $1$ is equal to $w_1=\ln{\(0.99/\(1-0.99\)\)}=4.595$ in both committees. However, their normalized weights $w'_{i,t}, t=1,2$ which will be used to assess the block approval condition \eqref{eq:condition}, are equal to $w'_{1,1}=0.376$ and $w'_{1,2}=0.148$ respectively. This shows that validator $1$ is better off in the first committee. \par
To reach this conclusion, the second committee was selected with much worse -- in terms of their voting profiles -- validators. However, similar discrepancies in validator $1$'s optimal normalized weight can be observed even in the presence of only $2$ equally good validators.
\begin{description}[leftmargin=*,noitemsep]
\item[Committee in $t=3$:] $n=10$ validators with voting profiles, $\vt p_1=\(0.99, 0.99, 0.95, 0.7,\dots, 0.7\)$.
\end{description}
In this case, the optimal quota $\overline q_3$ is equal to $\overline q_3=0.57$ and validator $1$'s optimal normalized weight is equal to $w'_{1,3}=0.254$. Similarly, validator $2$'s optimal normalized weight $w'_{2,3}$ is also equal to $0.254$ and validator $3$'s optimal normalized weight is equal to $0.163$. This highlights the power of these $3$ validators to decide the consensus outcome in committee $3$.
\end{example}
The previous example reveals two interesting facts. On the one hand, the dynamic adjustments in the optimal quota and validator's normalized weights create a consensus mechanism with intriguing properties when compared to static alternatives. The versatility of this algorithm under the concurrent theoretical guarantees that it optimizes the probability of a correct decision, or more accurately the expected collective welfare, motivate further research both from theoretical and practical perspectives concerning its adoption.\par
On the other hand, the dependence of the validators' \emph{relative power} on the profiles of the other committee members creates a incentive for malicious behavior that should be taken into account. Specifically, since validators' relative power increases as the profiles of the other committee members decrease, this motivates adversarial validators to perform \emph{eclipse attacks} and prevent other validators from voting properly. However, for such attacks to be profitable, the attackers needs to sustain their profile which means that they need to keep voting properly. Moreover, the actual outcome on the overall collective welfare will depend on the trade-off between the size of the attacker's stake and the size of the eclipsed validator's stake along with detection and potential penalties on such behaviors. In any case, fixing parameters as in P1-P3 seems to reduce the arsenal of potential attackers but a precise verdict on whether such attacks can be profitable depends on the exact protocol parametrization and should be examined separately in every case of adoption.

\section{Design \& Limitations}\label{sec:implementation}

The introduction of validators' voting profiles and the improvement in the consensus mechanism come with a trade-off in terms of security. Since the system becomes dependent on information that can be retrieved from the blockchain -- validators' votes are stored as messages \cite{Bu1811} -- this raises new risks on adversarial manipulation of this information. In the current section, we try to address these risks alongside implementation issues and limitations. 

\subsubsection*{Defense against known attacks}
To defend against consensus level attacks, current PoS proposals leverage the fact that non-voting nodes can be identified and penalized \cite{Bu1810,fanto18}. Yet, these actions are ineffectual against adversaries who can censor other validators or replenish their own stake and withstand the penalties to retain more than the required consensus-quota of the total stake \cite{Gi17,Pa18}. Although pessimistic, the scenario in which adversaries sustain an attack despite suffering losses on their own stake gains credence in the presence of potential \emph{out-of-protocol} profits. Further, consensus mechanisms that rely on PoS selection are vulnerable to \emph{flash} or \emph{blindsiding} attacks conducted by entering nodes \cite{Bo15,Ch18} or to accidental faults such as network latency, bad connectivity or simple negligence. Weighted voting provides lines of defense (in an obvious way) against these kinds of attacks or faults while retaining the benefits of the underlying PoS design. In addition, the preserved reliance on PoS for the selection of validators in committees and the allocation of rewards, protects against adversarial nodes that maintain small stake but high voting profile or vice versa.

\subsubsection*{Valid-invalid blocks}
\label{sub:valid}
A likely contentious assumption of the present model is that proposed blocks can be identified as valid or invalid\footnote{This is a \href{https://en.wikipedia.org/wiki/Chicken_or_the_egg}{chicken-egg} problem: if we can identify the \qt{canonical} blocks, then we can improve consensus with a scheme as the one proposed here. But in blockchains, the \qt{canonical} blocks are precisely the ones for which consensus was reached.}. In practice, different nodes may have different views of the blockchain and hence perceive proposed block differently. Yet, on closer inspection, this assumption can still be supported in the current framework: if a node is honest but has a (completely) different view of the canonical chain due to (say) poor connection to the network, then their votes do not contribute to extending the canonical chain and indeed can be regarded as incorrect. For instance, in Ethereum, which is the base case for this paper, \emph{valid--invalid} votes are well-defined and identified \cite{ryan2018eip,Da18}.

\subsubsection*{Updating scheme \& Computational Complexity}
Dealing with the issue of valid-invalid blocks becomes more relevant in the design of the updating scheme. Clearly, faults that can be identified as deliberate should be treated differently than accidental ones. For instance, a validator who has honestly participated in the protocol for a long period of time and drops offline for a short period of time, should be able to quickly recover their previous voting profile. This motivates updating profiles by two variables $s_{i,t}$ and $l_{i,t}$ representing \emph{short-term} and \emph{long-term} voting respectively. In general, the advantages of the here employed generalized MWU algorithm -- e.g., convergence rates and tight bounds on its overall performance \cite{Ar12} -- can be further exploited alongside stability properties of opinion dynamics in networks \cite{Ma17} to yield more robust results also in the present context. Finally, the computational complexity of initializing, storing and updating validators' profiles does not exceed the complexity of performing the same functions on validators' stakes and hence it is not expected to have a significant impact on the overhead of any consensus that keeps track of validators' stakes. This intuition is even more likely to materialize within the framework of most state of the art proposals, like EOS.IO, Casper and Ouroboros that promote consensus systems with a limited number delegates, staking pools or potential validators. 

\subsubsection*{Entry \& threshold voting profiles}
\label{sub:initiate}
The levels at which voting profiles are initialized and suspended are crucial, since they can incentivize or prevent \emph{Sybil attacks}. The exact parametrization can be case-depend, justified by simulations or incorporate prior information for each entering validator, e.g., reputable financial institution versus unknown private staker. The present choice to initialize voting profiles at $0.5$ and to require that they remain in $[0.5,1)$ for all $t\ge g>0$, where $g$ denotes a potential initial grace period, is supported by both intuitive and theoretical arguments. From a mathematical perspective, the initial choice of $0.5$ represents an uninformative prior on an entering's validators voting profile. Similarly, the reason for the upper bound is purely numerical, namely to avoid the instability in $\ln{\(p_{i,t}/\(1-p_{i,t}\)\)}$ if $p_{i,t}=1$. In contrast, the suspension of validators with voting profile -- or probability of making a correct decision -- lower than $0.5$ is more fundamental. \cite{Be97} provide a detailed probabilistic argument to explain that such voters are harmful to consensus and their votes should not be considered. Moreover, in the specific context of distributed networks, allowing nodes to participate with scores lower than their initial one triggers \emph{Sybil attacks}, since it motivates switching to new or creating multiple accounts. Finally, suspending validators in terms of their voting behavior relaxes the need for frequent economic penalties \cite{Bu1811}. This makes the PoS ecosystem more secure and appealing to investors who would otherwise be concerned of suffering losses due to accidental misbehavior, e.g., dropping off-line or being censored.

\subsubsection*{Future implementation}
Currently, the proposed scheme seems more attractive for systems with low numbers of staking nodes: these may be permissioned and delegated PoS blockchains or blockchains with fixed number of stake pools. More closely related to this spirit are the proposals of \cite{Bu1810,ouro17,An18,Li18}. In these settings, the computational complexity of implementing a profiling system is not an issue. However, this is also expected to remain true in the general case of permissionless blockchains, since extra data storage is limited to one additional value per validator and computations to update profiles are linear in the size of the committees. Light on this issues will be shed by future implementations on simulated blockchains and if successful, on properly designated testnets of these blockchains. In the core of these studies will be the understanding of issues related to network conditions (latency, scalability), computational complexity to store and retrieve profiles, indirect economic effects on cryptocurrency prices \cite{Kok20}, and the testing of different updating schemes in terms of their convergence rates and efficiency bounds.

%

\section{Summary \& Conclusions}\label{sec:conclusions}

Existing PoS protocols select staking nodes proportionally to their stake to form block-creating committees. Yet, they do not guarantee that selected committees will create blocks, since consensus may fail due to accidental or adversarial behavior. Thus, the perceived fairness in the distribution of rewards in proportion to the stake of participating nodes is actually violated. Motivated by this observation, we studied \emph{weighted voting} as a way to improve the consensus mechanism. We introduced validators' voting profiles -- that quantify the probability that a validator will cast a correct vote based on their so far contribution to the protocol -- and defined the proper mathematical framework to apply the results of \cite{Be97} on optimal decision rules in committee voting. Using the approach of \cite{Ar12}, we designed a multiplicative weights algorithm to update individual validator's profiles according to their voting behavior, the consensus outcome and collective blockchain welfare. The result is a two-layered scheme in which selection of nodes and allocation of rewards are performed by the underlying PoS mechanism whereas blocks are decided by a weighted majority voting rule. This scheme improves consensus \emph{within} selected committees by scaling votes according to validators' profiles without interfering with the PoS execution. Hence, it can be tested, implemented and reverted with minimal cost to existing users. On the negative side, the introduction of a profiling scheme in a distributed network raises new risks associated with manipulation of the relevant information. We discussed such risks along with actions that should be considered in the design of future PoS protocols.

\bibliographystyle{abbrv}
\bibliography{voting_bib}

\end{document}